\newtheoremstyle{boldstyle}%
{}
{}
{\itshape}
{}
{\bfseries}
{.}
{.5em}
{\thmname{#1}\thmnumber{ #2}\thmnote{ (#3)}}
\theoremstyle{boldstyle}
\newtheorem{theorem}{Theorem}[section]
\newtheorem{lemma}[theorem]{Lemma}
\newtheorem{definition}{Definition}
\pgfplotsset{compat=1.18}
\newcommand\copyrighttext{%
	\footnotesize \textcopyright 2024 IEEE. Personal use of this material is permitted.  Permission from IEEE must be obtained for all other uses, in any current or future media, including reprinting/republishing this material for advertising or promotional purposes, creating new collective works, for resale or redistribution to servers or lists, or reuse of any copyrighted component of this work in other works.}
\newcommand\copyrightnotice{%
	\begin{tikzpicture}[remember picture,overlay]
		\node[anchor=south,yshift=5pt] at (current page.south) {\fbox{\parbox{\dimexpr\textwidth-\fboxsep-\fboxrule\relax}{\copyrighttext}}};
	\end{tikzpicture}%
}
\title{\LARGE \bf
A Robust Model Predictive Control Method for Networked Control Systems
}
\author{Severin Beger \and Sandra Hirche
\thanks{S. Beger and S. Hirche are with the Chair of Information-oriented 	Control, TUM School of Computation, Information, and Technology, Technical University of Munich, 80333, Munich, Germany. Email: \texttt{\{severin.beger, hirche\}@tum.de}}%
\thanks{This work was supported by the Federal Ministry of Education and 	Research of Germany (BMBF) as part of the project 6G-life (project identification number 16KISK002).}
}
\begin{document}

\maketitle
\copyrightnotice
\thispagestyle{empty}
\pagestyle{empty}

\begin{abstract}

 Robustly compensating network constraints such as delays and packet dropouts in networked control systems is crucial for remotely controlling dynamical systems. This work proposes a novel prediction consistent method to cope with delays and packet losses as encountered in UDP-type communication systems. The augmented control system preserves all properties of the original model predictive control method under the network constraints. Furthermore, we propose to use linear tube MPC with the novel method and show that the system converges robustly to the origin under mild conditions. We illustrate this with simulation examples of a cart pole and a continuous stirred tank reactor.

\end{abstract}

\section{Introduction}
The advancement in data driven control requires solutions for robust networked control systems (NCS) in order to outsource heavy computation. This enables the usage of complex, data hungry methods for small embedded systems, while providing high flexibility for system design and scalability as well as easy maintenance.
However, the use of communication networks introduces additional challenges for closed loop control, such as time delays and packet losses. Extensive research has been conducted to develop stability analysis and controller design tools to cope with these flaws (cf. \cite{J.P.Hespanha.2007}), mostly in the setting of delays smaller than a sampling step. \\
For applications with high sample rates, such as robotic systems, networked communication results in much longer delays of several sample time steps. Model Predictive Control (MPC) is suited well for handling these network-related challenges \cite{YangShi.2021}, as its predictive nature can be explicitly used to compensate for delays and dropouts.
One of the first to suggest using MPC for delay compensation was \cite{A.Bemporad.} for a teleoperation scenario. 
In \cite{L.Grune.2009} the authors introduced a core property for deterministic predictive networked control methods under delays and packet losses, which they called \emph{prediction consistency}. When extending a nominal predictive control method to be prediction consistent under delays and dropouts, properties from the nominal closed loop system, such as stability, are conserved.
Several works have proposed prediction consistent methods to apply MPC in NCS, such as \cite{Pin.2009b} and \cite{Varutti.2009a}. Some extensions concern using packet-based communication networks without acknowledgments (also known as UDP-like)  \cite{Pin.2010}, event-based methods \cite{Varutti.2011}, and input to state stability \cite{Findeisen.2011,G.Pin.2011}. In \cite{Varutti.2009b} the authors address the difficulty of time synchronization between components in a network. 
More recent works aims to simplify the compensation schemes such as \cite{G.Pin.2021} and focus on realistic scenarios, e.g. using a WiFi network\cite{Pezzutto.2022}.
So far, little attention has been given towards using robust MPC methods in the networked setting. However, many remotely controlled systems have some limited computing power and access to the most recent state measurements, which can be exploited efficiently. 
We address this idea in the paper at hand. \\
Our contribution is twofold. Firstly, we propose a novel method for ensuring prediction consistency when using a predictive control method over a communication network subject to bounded time-varying delays and packet dropouts. It relies on UDP-like communication and does not require time synchronization, thus making it suitable for general-purpose communication networks. Secondly, we investigate how tube MPC can be used in this setting to robustify a process subject to bounded additive noise. Necessary limitations on the tube due to the network effects are derived. \\
The work is structured as follows: at first, we state the considered dynamics under constraints and our assumptions on the communication network. Subsequently, we present our method for ensuring prediction consistency. Then, we derive our main results on prediction consistency of our method and bounds on the tube due to network influences. Two simulation studies are presented to substantiate the efficacy of the proposed method. At last, we discuss our results and conclude our work.

\section{Problem Statement}

Consider the disturbed linear discrete-time system subject to additive noise $w$
\begin{align}
    x[k+1]=Ax[k]+Bu[k]+w[k], \quad x[0] = x_0
    \label{eq:dynamics}
\end{align}
where the states, inputs, and disturbances are bounded in the sets
\begin{align}
    \label{eq:constraints}
    x[k] \in \mathbb{X}, \,\,  u[k] \in \mathbb{U}, \,  w[k] \in \mathbb{W} \qquad \forall k \in \mathbb{N}_{0}. 
\end{align} 
The sets $\mathbb{U}\subset \mathbb{R}^m$ and $\mathbb{W}\subseteq \mathbb{R}^n$ are compact and convex polytopes, while $\mathbb{X}\subset \mathbb{R}^n$ is a convex, bounded polyhedron.\\
We assume sensors and actuators to be collocated. The control loop is closed with a remote controller over a non-acknowledged UDP-like packet-based communication network, which imposes communication constraints in the form of time-varying delays and packet losses. \\
Additionally, we assume that the logic unit at the plant side, in the following denoted as a local controller, has enough computational power to handle our proposed algorithm and to execute a linear state feedback law. \\
We consider a remote controller in the network that possesses close to unlimited computational resources, runs the predictive control algorithm, and takes delays and packet losses into account. Both sides have some memory capabilities in the form of buffers. The system setup is depicted in fig. \ref{fig:SystemOverview}. 
\begin{figure}[t]
\centering
\includegraphics[width=.5\textwidth]{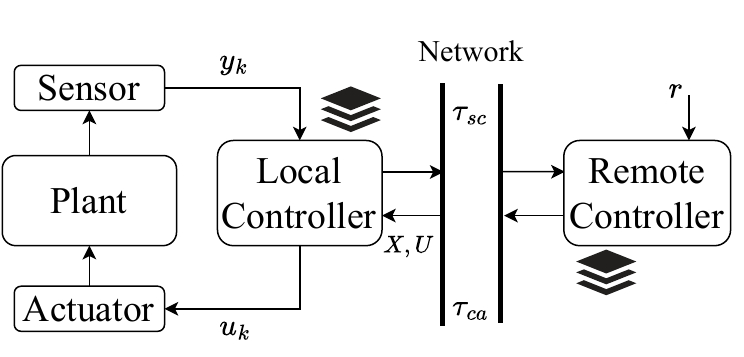}
\caption{Considered setup. The local controller forwards measurements to the remote side and receives actuation signals from the remote controller over a lossy network.}
\label{fig:SystemOverview}
\end{figure}\\
Consider the following assumptions on the computing units and network behavior: \\
\begin{enumerate}[label=\textbf{A\arabic*)},ref=A\arabic*]
    \item \label{A1}Every message sent from sensor to controller is time-stamped with the time  $t_p$ when the measurement was taken.
    \item \label{A2}The sensor-to-controller delay $\tau_{sc}$ as well as the controller-to-actuator delay $\tau_{ca}$ are upper bounded\footnote{We treat all delays longer than their respective upper bound as packet losses.}: $0 \leq \tau_{sc}[k] \leq \bar{\tau}_{sc}$, $0 \leq \tau_{ca}[k]\leq \bar{\tau}_{ca}$ .  
    \item \label{A3}The upper bound of the roundtrip time $\bar{\tau}_{RTT} = \bar{\tau}_{sc}+ \bar{\tau}_{ca}$ for a successful transmission without dropouts from sensor-to-controller-to-sensor is known, where $\bar{\tau}_{ca}$ includes computing time.
    \item \label{A4}The number of consecutive packet losses over the complete loop (from sensor to actuator) is upper bounded by $\bar{n}_l \in \mathbb{N}_0$.
    \item \label{A5}The local controller at the plant has enough computational power to execute a linear feedback policy.
\end{enumerate}
\vspace{0.3cm}
Note that we do not require time synchronization. 
Our scheme is designed to rely exclusively on timestamps from the plant side. \\
In this work, we consider a robust MPC strategy for linear systems, known as tube MPC, as described e.g. in \cite{Mayne.2005}.
Consider the nominal system without disturbances, $w = 0$, as
\begin{align}
    z[k+1] = Az[k]+Bv[k].
    \label{eq:nominaldynamics}
\end{align}
The following optimal control problem (OCP) $\mathbb{P}(z)$ is solved at the remote controller:\\
\begin{align}
    \label{eq:OCP}
    \min_{\bm{V}}\, \bigl(z[N]^T&Pz[N] + \sum_{j=0}^{N-1} z[j]^TQz[j]+v[j]^TRv[j]\bigr)\\
    \text{s.t.} \quad&z[j+1]=Az[j]+Bv[j] \nonumber\\
     \quad &z[j]\in \overline{\mathbb{X}}, v[j] \in \overline{\mathbb{U}}, z[N]\in\overline{\mathbb{X}}_{f},\nonumber
\end{align}
where $\bm{V}[k] = \{v[0|k],\cdots,v[N-1|k]\}$ is the resulting input sequence with the notation $v[j|k]$ denoting the input at time $j+k$ computed at time $k$ based on a measurement $x[k]$. $i \in [0,N-1]$, $N$ is the considered horizon.\\
Additionally, $P=P^T\succeq 0$, $Q=Q^T\succeq 0$, and $R=R^T\succ 0$. The matrices, as well as the terminal set $\overline{\mathbb{X}}_{f}$, must be chosen carefully to provide recursive feasibility and stability (see e.g. \cite{Rawlings.2017}).\\
The sets $\overline{\mathbb{X}} = \mathbb{X} \ominus \mathbb{S}$, $\overline{\mathbb{U}} = \mathbb{U} \ominus K\mathbb{S}$ and $\overline{\mathbb{X}}_{f} = \mathbb{X}_f \ominus \mathbb{S}$ are the state and input constraints tightened with a robust positive invariant tube $\mathbb{S}$.
A precalculated static feedback gain $K$ ensures that the system under disturbance converges to the nominal trajectory. The considered feedback law at the real system is
\begin{align}
    u[k] = v^*[0|k] + K (x[k]-z[k]),
    \label{eq:tubeMPClaw}
\end{align}
where $v^*[0|k]$ is the first optimal input, computed from $\mathbb{P}(z)$ with $z_0 = x[k]$, whereas the second term provides a converging behavior towards the predicted state sequence.\\
Because we know the roundtrip time $\tau_{RTT}$ as well as the time of a measurement $t_p$, we can use our system model to predict the state for the next predicted time of arrival and solve the OCP (\ref{eq:OCP}) to calculate inputs for that time $t_{p,d}$. Through this we counter act communication delays. By sending the complete computed input sequence, the local controller may use the later values of the sequence as a backup in case no new control values are delivered due to a packet dropout.\\
The main problem arises from predicting the future state for solving $\mathbb{P}[\hat{x}[k]]$ consistently without considering past predicted inputs that did not arrive at the plant side. The following definition formalizes this thought, where $t_a$ denotes the time of application at the actuator of a control input sequence $\bm{V}[t_a]$, while $t_s$ is the time of sensing of a measurement:
\begin{definition}\cite{L.Grune.2009} \label{def:PredictionConsistency}
    \begin{enumerate}[label=(\roman*)]
        \item We call a feedback control sequence $\bm{V}(t_{a})$ consistently predicted if the control sequence $[\hat{u}(t_s),\ldots,\hat{u}(t_{a}-1)]$ used for the prediction of $\hat{x}[t_{a}]$ equals the control sequence $[u(t_s),\ldots,u(t_{a}-1)]$ applied at the actuator.
        \item We call a networked control scheme prediction consistent if at each time $k$ the computation of $u[k]$ according to \eqref{eq:tubeMPClaw} in the actuator is well defined, i.e., $k-t_{a} \leq N-1$
        and $\bm{V}[t_a]$ is consistently predicted.
    \end{enumerate}
\end{definition}
\section{A simple consistent prediction method}
In this section, we define the behaviors of the local controller at the plant side and of the remote controller in the network, such that they result in a prediction consistent control scheme.  \\
In the following, we will denote the time at the plant and at the remote controller as $t_p$ and $t_c$, respectively. 
All considered time stamps, delays and dropout are a multiple of the sampling time $T$ and thus $t_j,\tau_j, n_j \in \mathbb{N}_{0} \, \forall j$. Furthermore, all input trajectories are tagged with an ID $i \in \mathbb{N}$. We denote the buffer at the plant side as $B_p$ and at the remote controller as $B_c$. The packets sent over the network are labeled similarly as $P_p$ and $P_c$. \\
Initially, the system needs to be in a save state, e.g. an equilibrium. Our method is initiated with the first input sequence that reaches the plant side, based on the measurement $x_0$ at time $t_p = 0$.\\
First, we develop the behavior of the local control unit at the plant side. It is sample-based and executes its algorithm with a sample rate $T_s$. At the beginning of each cycle, the system needs to check for a new control packet from the remote controller. Let us denote this operation with the Boolean variable $m_{new} \in \{0,1\}$. Each control packet $P_c$ includes an ID $i$, the corresponding desired time of application at the local controller $t_{p,d}(i)$, the current predicted trajectory of states $\bm{X}(i)= \{x[0|t_{p,d}(i)],x[1|t_{p,d}(i)],\cdots,x[N|t_{p,d}(i)]\}$ and the associated trajectory of inputs $\bm{V}(i) = \{v[0|t_{p,d}(i)],v[1|t_{p,d}(i)],\cdots,v[N-1|t_{p,d}(i)]\}$. Additionally, it contains the ID $i_{c,last}$ of the last input trajectory, which was used to predict $\hat{x}[\bar{\tau}_{RTT}|x[t_p(i)]]$. The local controller needs the latter to determine a consistently predicted input. A control packet can be denoted as $P_c = [i, t_{p,d}(i), \bm{X}(i), \bm{V}(i), i_{c,last}]$.
The local controller now checks, whether the time of execution is smaller or equal to the current time at the plant side: $t_{p,d} \leq t_p$. If so, the contents of the new control packet are put into the buffer $B_p$. If not, the new packet is discarded. \\
Then, the controller checks the buffer for an input expected for the current time step: $\exists i  \in B_p \text{ for which } t_{p,d}(i) = t_p.$ If such an input exists, we need to check if it is prediction consistent, by comparing 
\begin{align} \label{eq:PredictionConsistencyPlant}
    i_{c,last} = i_{p,last}.
\end{align} The trajectory is used, if this holds true. We update the predicted input $v^*[t_p] = v[0|t_{p,d}(i)]$, the predicted state $\hat{x}[t_p] = x[0|t_{p,d}(i)]$ and the ID of the last applied input $i_{p,last} = i$. The internal counter of successively used inputs of a single input trajectory is (re)set to $c_p = 1$. Older control packets with $t_{p,d}(i) < t_p$ can be pruned from the buffer. \\
In case the input was not prediction consistent, we delete it and all trajectories for later times $t_{p,d}(i) > t_p$ from the buffer. Now, we reuse the last consistent trajectory with ID $i_{p,last}$ and choose $v^*[t_p] = v[c_p|t_{p,d}(i_{p,last})]$ as well as the corresponding predicted state $\hat{x}[t_p] = x[c_p|t_{p,d}(i_{p,last})]$. Then we increment the counter of used inputs by one.\\
If no new control packet arrives at the plant, the procedure is the same as for the situation without prediction consistency, only that the buffer is not pruned. \\ 
In any case, we measure our system state $x[t_p]$ and use it to compute the input for the plant
\begin{align}
    u[t_p] = v^*[t_p] + K(x[t_p] - \hat{x}[t_p])
\end{align}
and applies it.
Next, the plant side sends a packet $P_p$ to the controller, which contains the current measurement $x[t_p]$, the associated time of measurement $t_p$ and the ID of the last considered input trajectory $i_{p,last}$ resulting in $P_p = [x[t_p],t_p,i_{p,last}]$.
A state flow diagram of the plant side algorithm is given in Fig. \ref{fig:StateFlow_Plant}.\\
\begin{figure}[!ht]
\centering
\includegraphics[width=.35\textwidth]{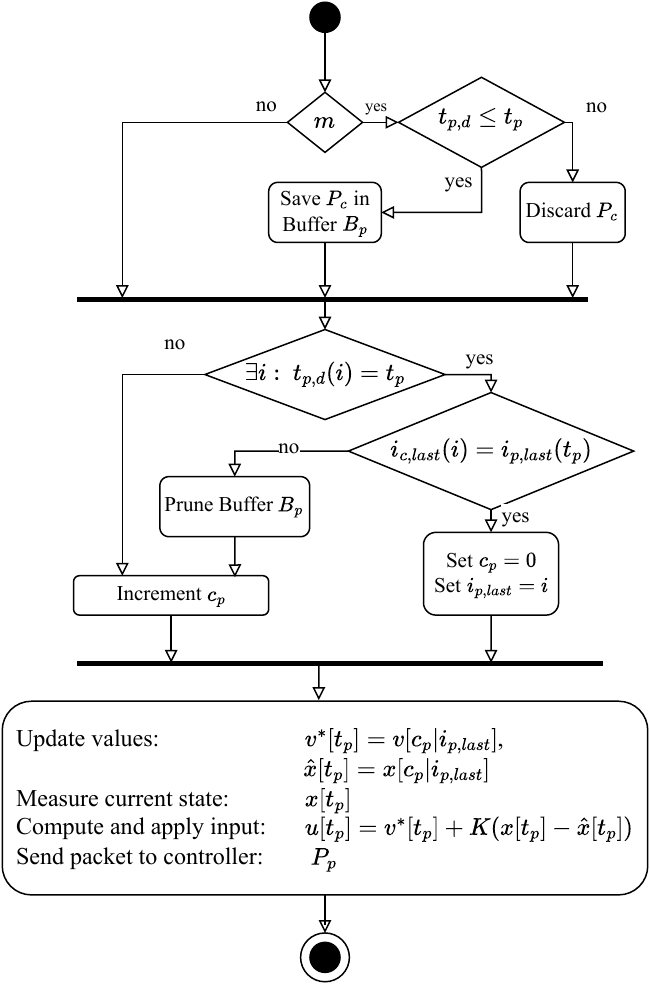}
\caption{Plant side algorithm. Top part handles incoming control packets, middle part checks on prediction consisteny, bottom task computes and executes the control input.}
\label{fig:StateFlow_Plant}
\end{figure}
Next, we turn our attention towards the remote controller. It runs in two different states, which are differentiated with the Boolean variable $s \in \{0,1\}$. The first is the nominal operation for which $s=0$. The other is the recovery mode, which is for situations where prediction consistency does not hold. Similar to the plant side, we use a Boolean variable $m_{new} \in \{0,1\}$ to determine the situations when a new message from the plant has arrived. This leaves us with four distinct situations at the remote controller. \\
The simplest case occurs when no new measurement arrives, and the controller is in nominal mode: $(m,s) = (0,0)$. We consider a reactive controller, which only computes a new input when it receives new data. Thus, the controller stays idle when no new packet arrives from the plant. \\
If a measurement arrives, regardless of the controller state, we first need to check whether the applied input at the time of the measurement matches the expected input for that point in time:
\begin{align} \label{eq:PredictionConsistentIDs}
    i_{c}(t_p) = i_{p,last}.
\end{align}
Here, $i_c(t_p)$ describes the ID of the input trajectory in $B_c$ with the smallest difference of the desired application time to the measurement time. 
\begin{align}
    i_c(t_p) &= \min_i (t_p-t_{p,d}(i)) ~~~~\text{s.t. } t_{p,d}(i) \leq t_p
\end{align}
If (\ref{eq:PredictionConsistentIDs}) holds true, the corresponding input was prediction consistent at $t_p$. If the system was in recovery mode, it is necessary to check, whether the newly arrived trajectory was a correction trajectory for the existing error. To be able to do this check, we assign every correction trajectory a marker $e(i) = i_{p,last}$, which carries the ID of the last applied input trajectory, when the prediction inconsistency occured. If the error correction check holds, the error is now assumed to be resolved, and hence we set $s=0$ to return to the nominal mode. In this case, we need to prune our buffer, $B_c$, as there are still some correction trajectories saved, which were certainly not used at the plant. Thus, we prune all buffer values, for whose IDs the condition 
\begin{align} \label{eq:PruneBufferOnCorrection}
    e(i) =e(t_{p,last}) \wedge i \neq i_{p,last}  ~ \forall i \in B_c
\end{align}
holds. Now $B_c$ and $B_p$ are consistent again and we can continue with the state prediction.
\\
The controller uses the new measurement $x[t_p]$ to predict the next state at $t_{p,d}(i) = t_p + \bar{\tau}_{RTT}$ by a simple forward role out of the dynamics with the known model and the expected last inputs for the timesteps between $t_p$ and $t_{p,d}(i)$. In nominal mode, the controller is optimistic and assumes that all sent packets since $t_p$ have arrived. From the state prediction $\hat{x}[t_{p,d}]$ it solves the OCP (\ref{eq:OCP}).
A new ID $i$ is computed by increasing a counter and saved in the buffer alongside the calculated inputs, states, time of application at the plant side, and the ID of the last used input trajectory $ I_{c,last} = i-1$ and simultaneously sent to the plant. \\
If (\ref{eq:PredictionConsistentIDs}) is not true, an error occurs, and the remote controller switches to or stays in recovery mode $s=1$. We know now that all input predictions for times later than the last applied input at $t_p$ are inconsistent. Therefore, the plant will reject them. Thus, the controller can prune its buffer $B_c$ and delete all predictions $i$  with 
\begin{align} \label{eq:PruningOnError}
    t_{p,d}(i) > t_p ~ \forall i \in B_c.
\end{align} 
Now, the buffers at the controller and plant side are equal. Next, a new input sequence, which we call correction trajectory, needs to be computed. When in recovery mode, the controller is pessimistic. It assumes no inputs have arrived since the last confirmed input with $i = i_{p,last}, ~ i \in B_c$. It uses the input trajectory of that old prediction together with the known nominal system model to calculate the new state at the time $t_{p,d} = t_p + (t_p - t_{p,d}(i_{p,last})) + \bar{\tau}_{RTT}$. As before, all necessary values are saved in the local buffer, this time with an error code $e(i)$, and put into a packet to be sent to the plant. What is important is that the ID of the last used input is set correctly as $i_{c,last} = i_{p,last}([t_p])$. This ensures that if some recovery packets from the controller to the plant are lost due to the network, the next packet that arrives will certainly be prediction consistent.\\ 
Finally, we look at the case where no new measurement arrives at the controller, and it is in recovery mode $(m,s) = (0,1)$. We use the last consistently predicted input trajectory and the last available measurement to predict a new state and calculate new inputs. Note that this is the only situation in which the controller calculates and sends new values, even though it did not receive new measurements. The state flow diagram for the logic used at the remote controller side is given in Fig. \ref{fig:StateFlow_RM}. 

\begin{figure}[ht]
\centering
\includegraphics[width=.35\textwidth]{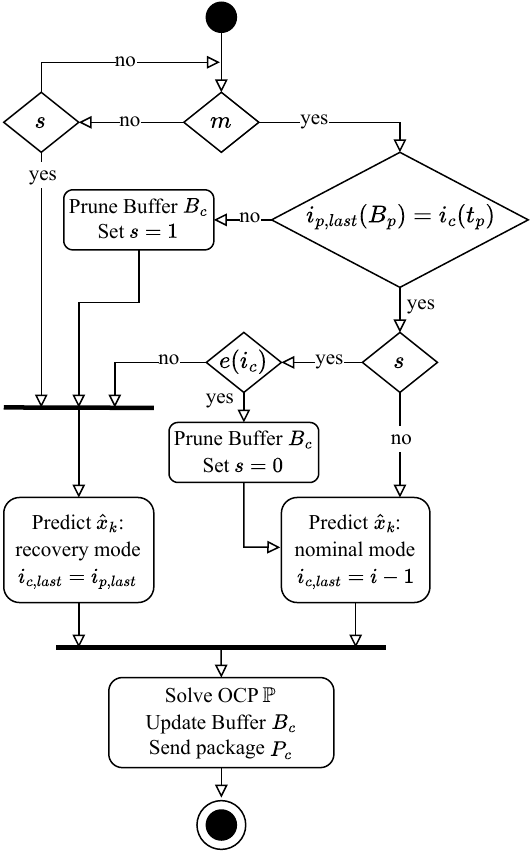}
\caption{Remote controller algorithm. Top part describes checks for measurement and prediction consistency, the middle part determines the state prediction, the bottom part solves the OCP.}
\label{fig:StateFlow_RM}
\end{figure}

\section{Properties of the method}

In \cite[Theorem 2.2]{L.Grune.2009} it is stated that a system remains practically asymptotic stable for a predictive controller, given that the nominal closed loop system is practically asymptotic stable and the used prediction method is prediction consistent. Along this lines it is noted that the prediction behavior can be separated from the robustness of the scheme.\\
Therefore, the analysis of our method is twofold: first, we examine all cases of occurring delays and packet losses to show the prediction consistency of our method. Secondly, we analyze the error propagation in the worst-case scenario and derive a bound for the robustly invariant tube $\mathbb{S}$ to guarantee the robust constraint satisfaction of our closed loop system in the networked setting. Then, when applying the findings jointly, we can extend the result to a robustly asymptotic stable system. \\
\begin{lemma}\label{thm:Lemma1}
    If assumptions (\ref{A1}-\ref{A4}) hold, the proposed method for delay and packet loss compensation using model predictive control methodologies in networked control systems under the influence of delays and packet dropouts is prediction consistent in the sense of Definition \ref{def:PredictionConsistency}.
\end{lemma}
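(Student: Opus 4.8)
The goal is to show prediction consistency in the sense of Definition 3.2, which has two parts: (i) every control sequence $\bm{V}[t_a]$ applied at the actuator is consistently predicted, and (ii) the computation of $u[k]$ via \eqref{eq:tubeMPClaw} is well-defined, meaning $k - t_a \leq N-1$. Since prediction consistency is a purely combinatorial/logical property of the buffering and ID-matching protocol (independent of the disturbance, as noted via the separation principle from \cite{L.Grune.2009}), my plan is a case analysis over the four $(m_{new}, s)$ situations at the remote controller, paired with the plant-side acceptance logic, verifying that the ID-matching conditions \eqref{eq:PredictionConsistencyPlant} and \eqref{eq:PredictionConsistentIDs} correctly enforce that the inputs assumed during prediction equal the inputs actually applied.

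The plan is to first establish the invariant that the protocol maintains: \emph{whenever the plant applies an input from a trajectory with ID $i$, the state prediction $\hat{x}[t_{p,d}(i)]$ used by the remote controller to generate that trajectory was rolled out using exactly the inputs the plant applied in the intervening steps.} I would prove this invariant by induction on the sequence of applied inputs, using the base case that the system starts in a known safe state with $x_0$ at $t_p=0$. First I would handle the nominal case $(m_{new},s)=(\cdot,0)$: when \eqref{eq:PredictionConsistentIDs} holds at the controller, the optimistic forward roll-out assumed all packets arrived, and the matching condition certifies this assumption was correct, so the plant's check \eqref{eq:PredictionConsistencyPlant} passes and consistency is preserved. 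Next I would handle the error-detection transition: when \eqref{eq:PredictionConsistentIDs} fails, the controller enters recovery, prunes its buffer via \eqref{eq:PruningOnError} so that $B_c$ and $B_p$ agree, and generates a correction trajectory pessimistically from the last confirmed input $i_{p,last}$. The key subtlety is that setting $i_{c,last}=i_{p,last}$ guarantees that even if several recovery packets are dropped, the \emph{next} packet to arrive still passes \eqref{eq:PredictionConsistencyPlant}, because the reference point does not advance until confirmation — this is where assumption \eqref{A4} (bounded consecutive losses $\bar{n}_l$) enters to ensure recovery terminates.

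For part (ii), the well-definedness bound $k - t_a \leq N-1$, I would argue that the counter $c_p$ of successively reused inputs from a single trajectory can increase only while no new consistent packet arrives, and that the worst-case horizon consumed is governed by the roundtrip time $\bar{\tau}_{RTT}$ together with $\bar{n}_l$ consecutive losses. Invoking assumptions \eqref{A2}–\eqref{A4}, the maximum number of steps the plant must fall back on a single trajectory is bounded by $\bar{\tau}_{RTT} + \bar{n}_l$ (or a similar explicit expression derived from the delay and loss bounds); requiring $N$ to exceed this bound ensures the index $v[c_p|t_{p,d}(i_{p,last})]$ always lies within the available sequence. I would make the counting explicit here, tracking how $t_{p,d}$ is chosen as $t_p + \bar{\tau}_{RTT}$ in nominal mode versus $t_p + (t_p - t_{p,d}(i_{p,last})) + \bar{\tau}_{RTT}$ in recovery.

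The main obstacle I anticipate is the \textbf{recovery-mode bookkeeping}, specifically verifying that the error-marker mechanism $e(i)=i_{p,last}$ together with the pruning rule \eqref{eq:PruneBufferOnCorrection} keeps the plant and controller buffers consistent across an entire recovery episode during which \emph{multiple} correction trajectories may be in flight and some may be dropped. I must show that when a correction finally succeeds, \eqref{eq:PruneBufferOnCorrection} removes exactly the stale correction trajectories (those sharing the error marker but not equal to the confirmed ID) and nothing more, so that the buffers re-synchronize cleanly and the invariant is restored before returning to nominal mode. Demonstrating that no inconsistent trajectory can ever be accepted by the plant during this window — because the plant's independent check \eqref{eq:PredictionConsistencyPlant} acts as a final gatekeeper regardless of network reordering — is the crux that ties the two-sided protocol together and completes the consistency argument.
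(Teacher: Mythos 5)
Your overall strategy matches the paper's proof: both reduce prediction consistency to (a) keeping the buffers $B_c$ and $B_p$ consistent before every state prediction and (b) letting the plant-side ID check \eqref{eq:PredictionConsistencyPlant} act as the final gatekeeper, followed by a case analysis over network events and controller modes (your decomposition by $(m_{new},s)$ is the same case analysis the paper organizes by backward versus forward channel). Your treatment of the recovery mechanism --- pruning via \eqref{eq:PruningOnError}, pessimistic correction from $i_{p,last}$, and the observation that fixing $i_{c,last}=i_{p,last}$ makes the next arriving packet consistent even after dropped corrections --- is exactly the paper's argument.

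The gap is in part (ii), the well-definedness bound. You bound the number of steps the plant must fall back on a single trajectory by $\bar{\tau}_{RTT}+\bar{n}_l$. That quantity is the paper's $M$ in \eqref{eq:MaxTimeError}, the time from error \emph{detection at the controller} until a correction trajectory arrives at the plant --- but it omits the detection latency. The worst case the paper identifies is this: the system has just recovered, the plant applies the first input of a correction trajectory, the controller returns to nominal mode and sends a fresh trajectory, and that packet is dropped. The controller only learns of this new inconsistency when a measurement carrying the stale $i_{p,last}$ completes the loop, which costs an additional full roundtrip time $\bar{\tau}_{RTT}$ before recovery mode is even re-triggered. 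During that entire window the plant keeps consuming inputs $v[c_p|t_{p,d}(i_{p,last})]$ from the same old trajectory, so the required horizon is $N \geq \bar{n}_l + 2\bar{\tau}_{RTT}$, the paper's condition \eqref{eq:MaximumHorizon}, not $\bar{n}_l+\bar{\tau}_{RTT}$. Your hedge (``or a similar explicit expression'') means the plan is salvageable, but as written the counting would certify a horizon that is one roundtrip too short, and the index $c_p$ could then run past the end of the stored input sequence --- precisely the failure that part (ii) of Definition \ref{def:PredictionConsistency} is supposed to exclude.
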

\begin{proof}
    The proof can be found in the appendix.
\end{proof}

In the following we show the necessary lower bound for the disturbance invariant set, such that the system is robustly asymptotically stable with the proposed method.
\begin{theorem}\label{thm:Theorem2}
    Given a linear system with additive bounded disturbance as in (\ref{eq:dynamics}) with constraints (\ref{eq:constraints}) and a robustly asymptotically stabilizing tube MPC law as in (\ref{eq:tubeMPClaw}) based on the OCP (\ref{eq:OCP}) and a precomputed, stabilizing Feedback gain $K$. Assume that assumptions (\ref{A1}-\ref{A5}) hold and the initial state $x_0$ lies in the feasible set. Assume that the chosen time horizon fulfills (\ref{eq:MaximumHorizon}) and the tube is lower bounded by  \begin{align} 
         \bigoplus_{j=0}^{L}(A+BK)^{j-1}\mathbb{W}\subseteq\mathbb{S}
        \label{eq:Tube}
    \end{align}
    where $L = \max[N,2 \bar{n}_l + 3\bar{\tau}_{RTT}-1]$. Then the system is robustly asymptotically stable.
\end{theorem}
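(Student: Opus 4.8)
The plan is to decouple the two effects, exactly as the remark following Theorem~2.2 in \cite{L.Grune.2009} suggests: the \emph{prediction behaviour}, which is already handled by Lemma~\ref{thm:Lemma1}, and the \emph{robustness} of the tube, which must absorb both the additive disturbance $w$ and the extra deviation that network delays and dropouts inject into the realised trajectory. First I would recall that under the tube law \eqref{eq:tubeMPClaw} the mismatch $e[k]:=x[k]-z[k]$ between the true state and the nominal state used in the feedback obeys the closed-loop error recursion
\begin{align}
    e[k+1] = (A+BK)\,e[k] + w[k],
    \label{eq:errordyn}
\end{align}
so that after $p$ steps of running on a fixed nominal anchor, starting from a matched state, one has $e\in\bigoplus_{j=0}^{p-1}(A+BK)^{j}\mathbb{W}$. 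Since $K$ is Schur-stabilising by assumption, the whole argument reduces to bounding the largest number of consecutive samples $p$ during which the feedback \eqref{eq:tubeMPClaw} runs without a fresh, measurement-anchored re-prediction, and to showing that this worst case equals $L$.

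Next I would invoke Lemma~\ref{thm:Lemma1}: because the scheme is prediction consistent, the nominal sequence applied in the actuator always coincides with a genuine optimal trajectory of $\mathbb{P}(z)$, and the well-posedness condition $k-t_a\le N-1$ of Definition~\ref{def:PredictionConsistency}(ii) is guaranteed by the horizon requirement \eqref{eq:MaximumHorizon}. Hence the nominal closed loop is precisely the undisturbed tube-MPC loop, which is asymptotically stable to the origin under the standing choice of $P,Q,R,K$ and $\overline{\mathbb{X}}_f$ for recursive feasibility and stability (cf.\ \cite{Mayne.2005,Rawlings.2017}); thus $z[k]\to 0$. It then only remains to certify that $e[k]$ stays inside $\mathbb{S}$ for all $k$, which through the tightened constraints $\overline{\mathbb{X}}=\mathbb{X}\ominus\mathbb{S}$ and $\overline{\mathbb{U}}=\mathbb{U}\ominus K\mathbb{S}$ simultaneously yields robust constraint satisfaction $x[k]\in\mathbb{X}$, $u[k]\in\mathbb{U}$ and robust convergence of $x[k]=z[k]+e[k]$ to the invariant neighbourhood $\mathbb{S}$ of the origin.

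The heart of the proof, and the step I expect to be the main obstacle, is the worst-case accounting that produces $L=\max[N,\,2\bar{n}_l+3\bar\tau_{RTT}-1]$. I would enumerate the four operating modes of the remote controller together with the plant-side backup logic and, for each, track how long the applied input can remain pinned to a single predicted sequence. The benign branch yields the first term: running on the stored backup of one feasible trajectory can consume at most $N$ of its entries before well-posedness forces a reset, so the error may accumulate over up to $N$ samples. The adversarial branch yields the second term: a prediction inconsistency costs one round trip $\bar\tau_{RTT}$ before the returning timestamp \eqref{eq:PredictionConsistentIDs} reveals it, a second $\bar\tau_{RTT}$ of look-ahead built into the recovery prediction $t_{p,d}=t_p+(t_p-t_{p,d}(i_{p,last}))+\bar\tau_{RTT}$, and a third $\bar\tau_{RTT}$ for the correction trajectory to be delivered and validated, with up to $\bar{n}_l$ consecutive dropouts admissible in each direction, which subtracting the one step already folded into the initial correction gives $2\bar{n}_l+3\bar\tau_{RTT}-1$. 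I would then argue that over this maximal horizon the reachable error set is contained in $\bigoplus_{j=0}^{L}(A+BK)^{j-1}\mathbb{W}$, so that the lower bound \eqref{eq:Tube} makes $\mathbb{S}$ robustly positively invariant for \eqref{eq:errordyn} even under the network effects. The delicate point is verifying that no admissible sequence of delays and losses escapes this count, in particular that the recovery mode cannot be re-entered before the outstanding correction is confirmed, which is precisely what the error markers $e(i)$ and the pruning rules \eqref{eq:PruneBufferOnCorrection}--\eqref{eq:PruningOnError} are designed to enforce; combining this invariance with the nominal stability of the preceding paragraph then delivers robust asymptotic stability.
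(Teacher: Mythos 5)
Your proposal is correct and follows essentially the same route as the paper: decoupling prediction consistency (Lemma \ref{thm:Lemma1} plus Theorem~2.2 of \cite{L.Grune.2009}) from tube robustness, using the error recursion $e[k+1]=(A+BK)e[k]+w[k]$, and deriving $L$ by a worst-case trace of consecutive dropout/recovery cycles. Your bookkeeping of the three roundtrips and $2\bar{n}_l$ dropout steps is organized slightly differently from the paper's scenario (first correction arriving after $\bar{n}_l+\bar{\tau}_{RTT}-1$ steps, followed by an immediate second error resolved after another $2\bar{\tau}_{RTT}+\bar{n}_l$ steps), but it yields the same bound $L=2\bar{n}_l+3\bar{\tau}_{RTT}-1$ and the same conclusion.
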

\begin{figure}[b]
\centering
\includegraphics[width=.45\textwidth]{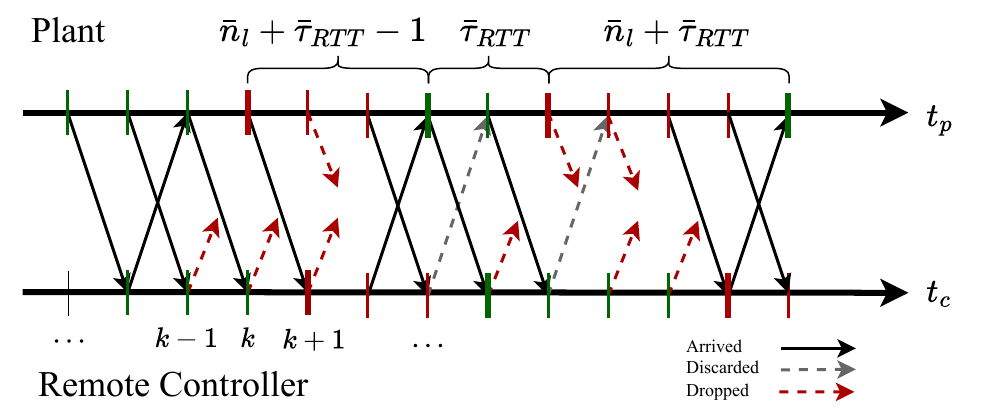}
\caption{Worst case delay and dropout scenario of the presented method. Arrows represent packets. $\bar{\tau}_{RTT} = \bar{n}_{l} = 2$.}
\label{fig:WorstCaseTube}
\end{figure}
\begin{proof}
    From Lemma \ref{thm:Lemma1} the prediction consistency of our scheme follows. By making use of  \cite[Theorem 2.2]{L.Grune.2009} we can assume that an already nominally asymptotically stable MPC method preserves this property, if a prediction consistent method is used. \\ 
    Robust asymptotically stable behavior for the nominal closed loop system is provided by linear tube MPC as long as our system operates within disturbance invariant sets. 
    To derive a bound on the tube in our scenario, we need to consider the worst possible error from the considered disturbances at the maximum prediction step $N$ for a given input sequence.\\
    Given a state measurement $x[0]$ at time $k=0$, and the current error due to disturbance $e[0] = 0$, the error develops as follows
    \begin{align}\label{eq:errordevelopment}
        e[k] &= (A+BK) e[k-1] + w[k-1]\\  \nonumber
        &= \sum_{j=1}^{k}(A+BK)^{j-1} w(k-j).
    \end{align}
      Let's consider a drop from controller to actuator. As a result, no new input trajectory arrives at the plant, and it has to reuse the previous one. Let's assume the remote controller is notified about the prediction inconsistency with the measurement $t_p=k$. It starts to compute and send correction trajectories based on this measurement in every time step, until the error is resolved. In the worst case for disturbance, all measurements are dropped for the maximum amount of steps $\bar{n}_l$, but right before a new measurement is received, the first successful transmission of a correction trajectory based on the measurement at $k$ takes place after $\bar{n}_l -1$ steps. It is applied at $t_{p,cor} = k + \bar{n}_l + \bar{\tau}_{RTT}-1$. At this point in time the predicted state and the actual state differ by $e[t_{p,cor}] = x[t_{p,corr}] - \hat{x}[\bar{n}_l + \bar{\tau}_{RTT}-1 | k]$. Now, in the worst-case situation for the switch from recovery mode to nominal mode, the input trajectory, which is computed from the first measurement after the arrival of the correction trajectory, drops out. Again, recovery mode needs to be triggered, and a correction trajectory needs to be sent. Assuming a complete blackout for $\bar{n}_l$ steps followed by the maximum roundtrip time to deliver the correction trajectory, the time between the two correction trajectories arriving at the plant is $2\bar{\tau}_{RTT} + \bar{n}_l$. At this point in time, the error originating from the first error at time $k$ has progressed to its maximum value $e[2\bar{\tau}_{RTT} + 3 \bar{n}_l -1]$.  Thus, the maximum of either
      \begin{align} \label{eq:MaxDistTime}
          L = 3\bar{\tau}_{RTT} + 2 \bar{n}_l -1
      \end{align} or the freely chosen prediction horizon $N$ gives us the number of steps to consider for error development and, therefore, a bound on the minimum disturbance invariant set. This concludes the proof.
\end{proof}
Fig. \ref{fig:WorstCaseTube} illustrates the described worst-case situation with $\bar{n}_l = \bar{\tau}_{RTT} = 2$.

\section{Simulation Examples}

In the following, we present two simulation examples. For both we use a simulated network composed of two Markov chains, one for a delay state and the other for dropouts. While the latter uses two states (dropout or successful transmission), the former has three states, representing the load of the network. Each state operates on a different Weibull distribution, to represent situations of low, medium and high network traffic.

\subsection{Cart Pole}

The first use case is a cart pole system. We use the linearized version of a real system introduced in \cite{Branz.2022} and used in \cite{Pezzutto.2022} amongst others. The state vector $x = [s,\alpha, \dot{x},\dot{\alpha}]^T$ contains the position $s$ of the device on a straight line
as well as its tilt angle $\alpha$. We use a sampling time of $T_s = 0.01s$. The corresponding discretized dynamic matrices are 
\begin{align}
A = \begin{bmatrix}
    1 & 0.002 & 0.010 & 0 \\
    0& 1.003 & 0 & 0.010\\
    0 & 0.437 &0.963 & 0.0353 \\
    0 & 0.551 & 0.019 & 0.981
    \end{bmatrix}, & ~B = \begin{bmatrix}
    0\\
    0\\
    0.048\\
    0.032
\end{bmatrix}.
\end{align}
For the sake of simplicity we assume full state measurements. We set $N=50$, $R=1$, and $Q = diag(10,1000,1,1)$. Furthermore, we obtain $K_{LQR}$ as well as $Q_N=P$ from solving the discrete algebraic Riccati equation with the proposed matrices. Finally, for our network, we set $\bar{\tau}_{RTT} = 7$ and $\bar{n}_l=3$. \\
The considered delays are $w[k] = \begin{bmatrix}0&0&10&1\end{bmatrix}^Td[k] ~~ \text{with } d[k] \in [-1,1]$.
Two constraints are imposed on the system. Firstly, the tilting angle needs to be in $[-0.2, 0.2]$rad to ensure that the linearization holds. Additionally, the input needs to be between $[-20, 20]$V, which is a physical limitation of the real system. \\
\begin{figure}[t]
\centering
\includegraphics[width=.5\textwidth]{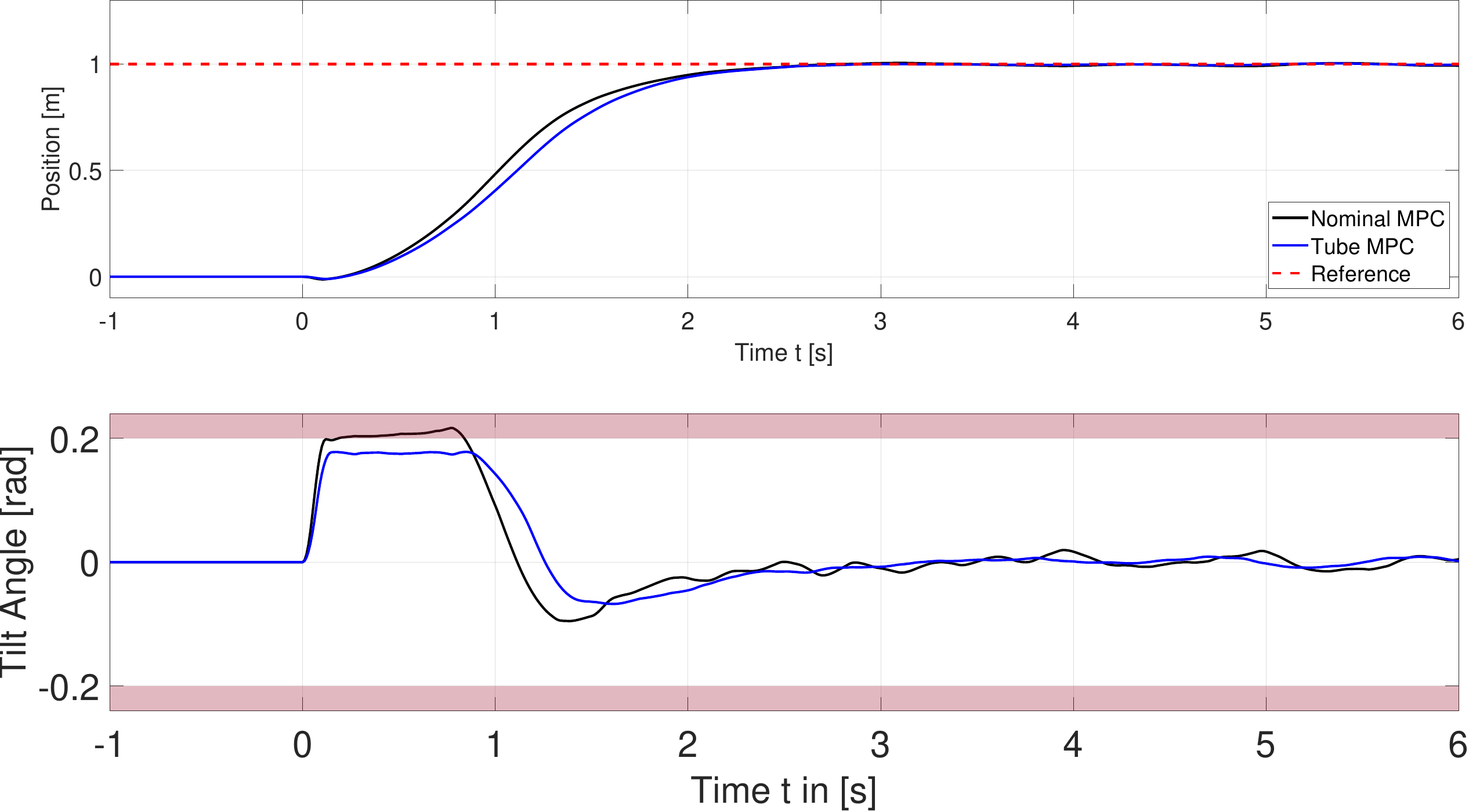}
\caption{Position and tilt angle of the balance robot. We compare a nominal MPC with the tube based MPC method.}
\label{fig:PositionBalanceRob}
\end{figure}
Fig. \ref{fig:PositionBalanceRob} shows position and tilt angle over time. We compare a nominal linear MPC with the tube MPC approach, both applying the prediction consistent method. As our horizon is much larger than the assumed delays and dropouts, the MPC is able to stabilize the system in both scenarios despite the imposed communication constraints. However, as a reaction to disturbance may be delayed up to $70\text{ms}$, the nominal MPC fails to stay within the tilt angle constraints. The tube MPC approach, on the other hand, fulfills all constraints, although the considered tightening of the bounds seems to be overly conservative. In Fig \ref{fig:RTTBalanceRob} the roundtrip time is depicted. The red line shows the expected maximum delay $\bar{\tau}_{RTT}$. Despite the frequent packet losses the method manages to safely guide the system to its desired location, even under substantial disturbances.

\subsection{Continuous Stirred Tank Reactor}
Our second use case is from the process industry. It is interesting, as it is nonlinear as well as operates on a much different time scale than the cart pole example. We consider an adiabatic continuous stirred tank reactor (CSTR) with the following dynamics
\begin{align}
\frac{dT}{dt} &= \frac{F}{V}(T_{A0}-T) + \sum_{i=1}^3\frac{-\Delta H_i}{\rho c_{p}} k_{i0}e^{\left(\frac{-E_{i}}{RT}\right)}C_{A} + \frac{Q}{\sigma c_{p} V_r} \\
\frac{dC_A}{dt} &= \frac{F}{V_r}(C_{A0}+\Delta C_{A0}-C_{A})- \sum_{i=1}^3k_{i0}e^{\left( \frac{-E_i}{RT} \right)}C_{A} 
\end{align}
with states $x = [T,C_A]^T$, which denote the substance temperature as well as the molar concentration of the considered reactant. As input we apply or remove heat through $u = Q$, which is bounded by $\vert Q \vert \leq 10^5 \text{kJ/h}$. Furthermore, we consider an unknown, bounded time-varying uncertainty $\Delta C_{A0} \in [-0.5,0.5] \text{mol/l}$. All parameters of the system can be found in \cite{Liu.2008}. For our simulation we consider $T_s = 0.025\text{h}$, $T_{end}=0.6h$, $N = 10$, $Q = diag(1, 1000)$, $R = 10^{-6}$. The cost function is quadratic, as shown in the OCP (\ref{eq:OCP}). However, we do not consider terminal costs but imply the final state $x_N$ to lie within the ellipse $(x_1-x_{1,d}) ^2 + \frac{(x_2-x_{2,d})^2}{0.2^2} \leq 1$, meaning that the final state shall not deviate more than $1\text{K}$ and $0.2\text{mol/l}$ from the desired state, which is $x_d = [388\text{K},3.59 \text{mol/l}]$. For the netwok we choose $\bar{\tau}_{RTT} = 4$ and $\bar{n}_l = 2$ with a prediction horizon $N=10 =2\bar{\tau}_{RTT} + \bar{n}_{l} $. As our system is nonlinear, we use an adaptive approach linearizing around every point of a predicted MPC trajectory and computing the corresponding LQR gains. These gains are then forwarded alongside the inputs and state trajectories to the local controller and iterated like the predicted input. The states over time are plotted in fig. \ref{fig:PositionsCSTR}, whereas fig. \ref{fig:PhasePortraitCSTR} shows the phase portrait of the system. As can be seen, the nominal MPC cannot stay within the desired terminal set and seems to become unstable towards the end of the simulation. The tube MPC, on the other hand, manages to keep the system in the desired terminal set, even though the disturbance on the molar concentration influences the system heavily. \\
It must be noted that we don't use the proposed constraint tightening in this scenario and thus don't guarantee save constraint fulfillment anymore. Nonetheless, this approach seems to be an effective strategy, especially for systems like the CSTR, where time and input constraints are not of too much concern.\\
\begin{figure}[t]
\centering
\includegraphics[width=.5\textwidth]{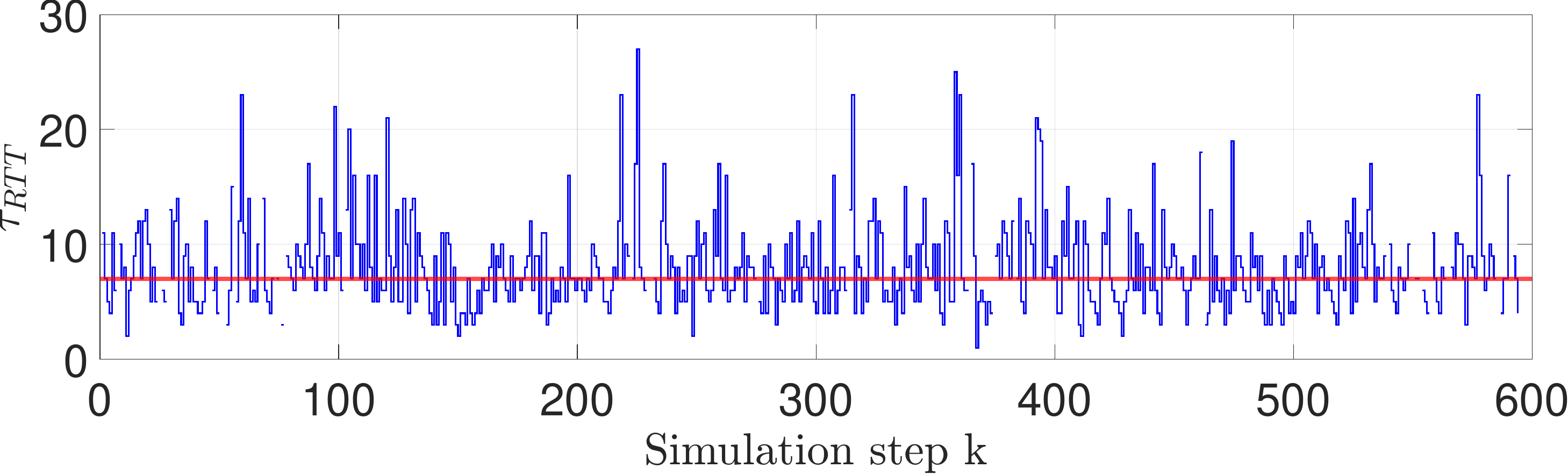}
\caption{Roundtrip time of the balance robot example. The red line marks the assumed roundtriptime $\bar{\tau}_{RTT}$. Everything above is considered a dropout.}
\label{fig:RTTBalanceRob}
\end{figure}
Due to its long sampling time, this simulation example is also suitable to investigate the timing of our proposed delay and dropout compensation method. In fig. \ref{fig:NWBehaviorCSTR} the network behavior under the proposed method is presented. The red line on the top represents the plant side, while the blue on the bottom represents the remote controller. Solid arrows are successfully transmitted packets. The dashed arrows either represent dropouts (red) or discarded messages (blue). The latter happens particularly often when packets are disordered. The red crosses on the bottom denote that the remote controller is in recovery mode. Above the plant and remote controller, several symbols show when the packet based on the measurement with the same symbol is applied at the actuator. As is evident, each packet is earliest applied after $\bar{\tau}_{RTT}$ time steps.

\begin{figure}[b]
\centering
\includegraphics[width=0.4\textwidth]{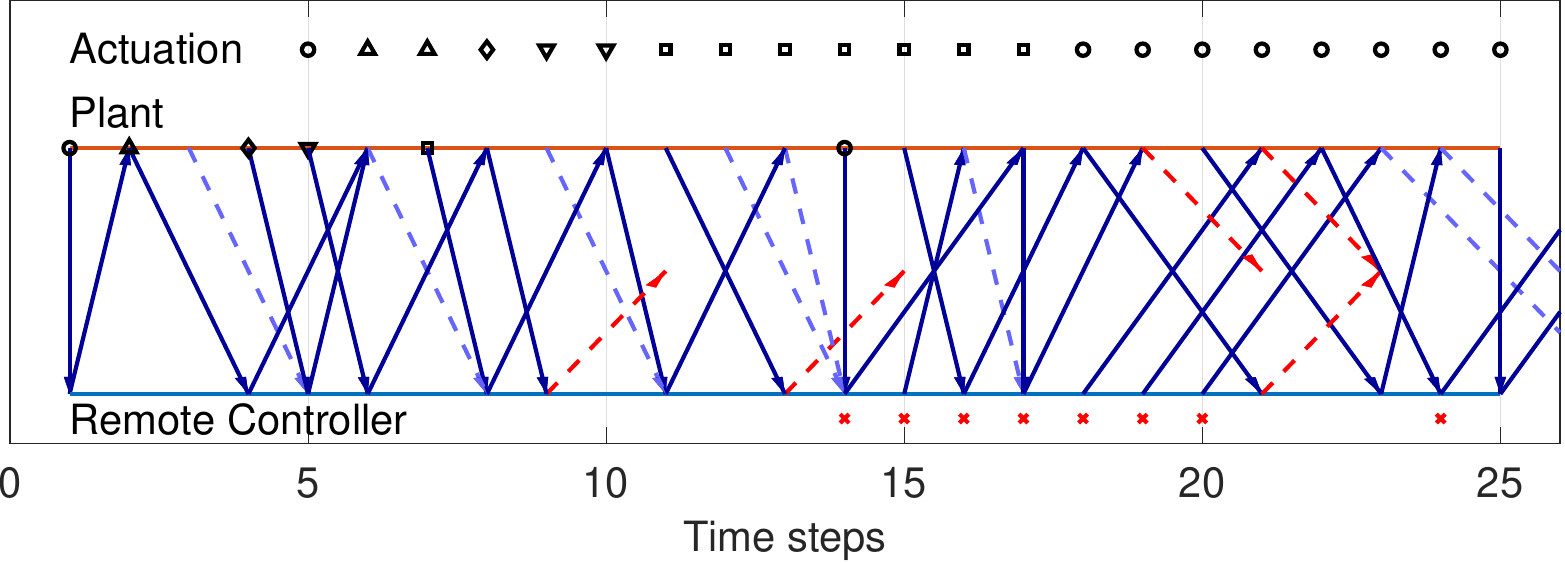}
\caption{Network behavior of the CSTR simulation. Dashed lines show dropped (red) and discarded (blue) packets. The symbols denote, which packet was used for the corresponding actuation. As expected the distance from the appearance of symbol on the plant line to its time of application on the actuation line is at least $N=6$.}
\label{fig:NWBehaviorCSTR}
\end{figure}
\section{Discussion}
\begin{figure}[t]
\centering
\includegraphics[width=.5\textwidth]{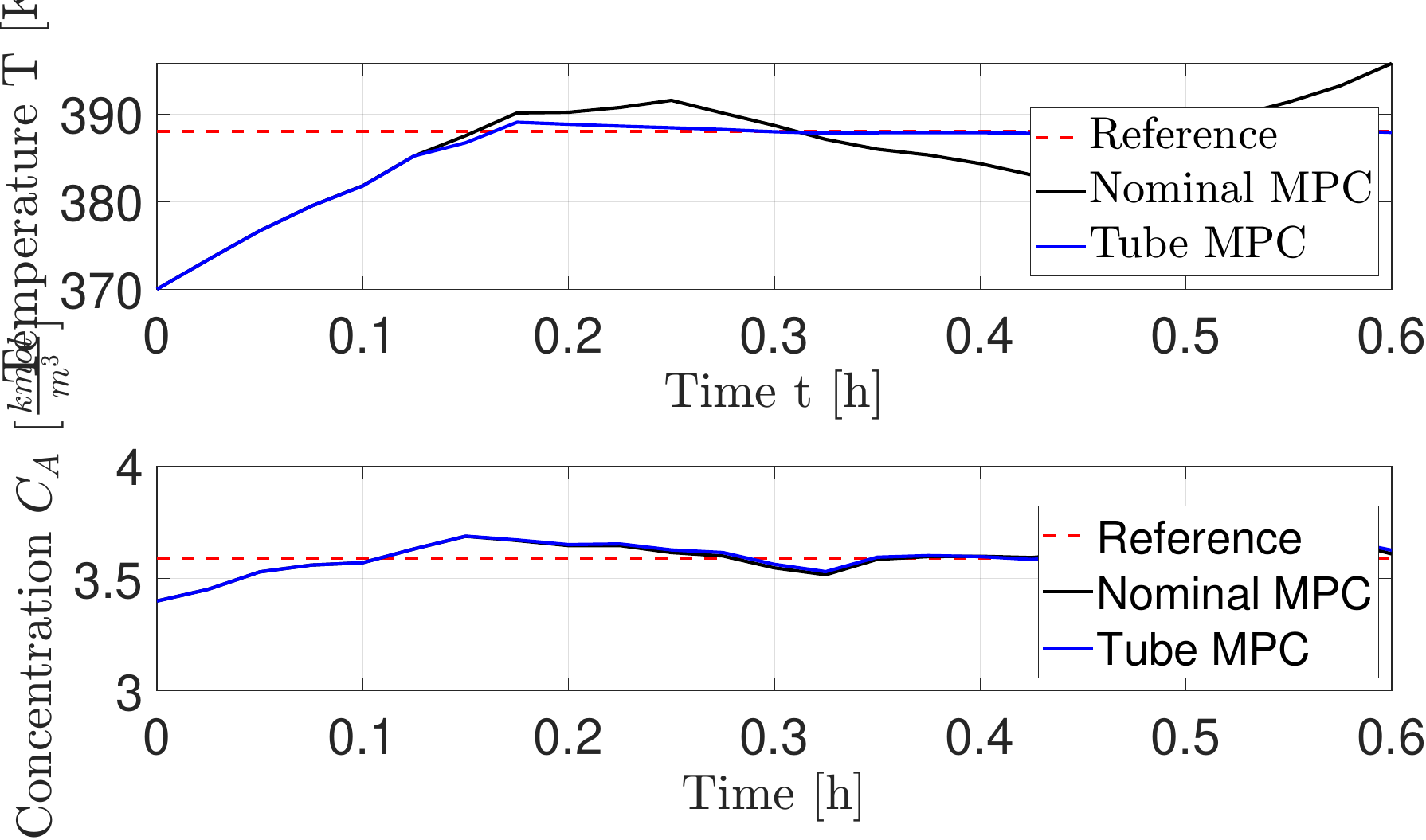}
\caption{State development over time for the CSTR simulation. The top depicts the temperature $T$, while the concentration $C_A$ is shown on the bottom. The tube approach handles the disturbances much better, particularly regarding temperature.}
\label{fig:PositionsCSTR}
\end{figure}
Our prediction consistent method stands out in two ways. First of all, it is suitable for a broad variety of communication networks as it does not rely on any form of acknowledged messaging as in \cite{Pin.2009b,L.Grune.2009} or time synchronization \cite{Varutti.2009a}. Additionally, our method performs consistently in networks, where average delays from controller to actuator are above the actual sample rate. As we do not consider a changing time horizon depending on the current delay, our method is also simpler to implement than, e.g., \cite{G.Pin.2021} while showing comparable performance, as can be seen from the similar simulations on the CSTR.
Secondly, it adds robustness to the networked control system using tubes, as was demonstrated in the simulations.
The main drawback of our approach is its rigidity due to buffering up to a bounded delay. This introduces artificial delays in situation, where the roundtrip was actually shorter, and thus it may decrease performance. Adapting online to a current best guess of the roundtrip delay or introducing time synchronization can leverage this disadvantage\\
The result from theorem \ref{thm:Theorem2} on the minimum invariant disturbance set states the expected. If $N < L$ we have to tighten the constraints more than for the nominal system due to delays and packet dropouts. However, the classical approach of approximating the tube with the maximum invariant disturbance set $\mathbb{S}_\infty$ still holds. Additionally, if the horizon is long enough, it dominates the influences of delays and dropouts and thus provides the same bounds as in the nominal scenario. Therefore, the condition seems rather mild, while the method ensures robust stability. 
\begin{figure}[t]
\centering
\includegraphics[width=.33\textwidth]{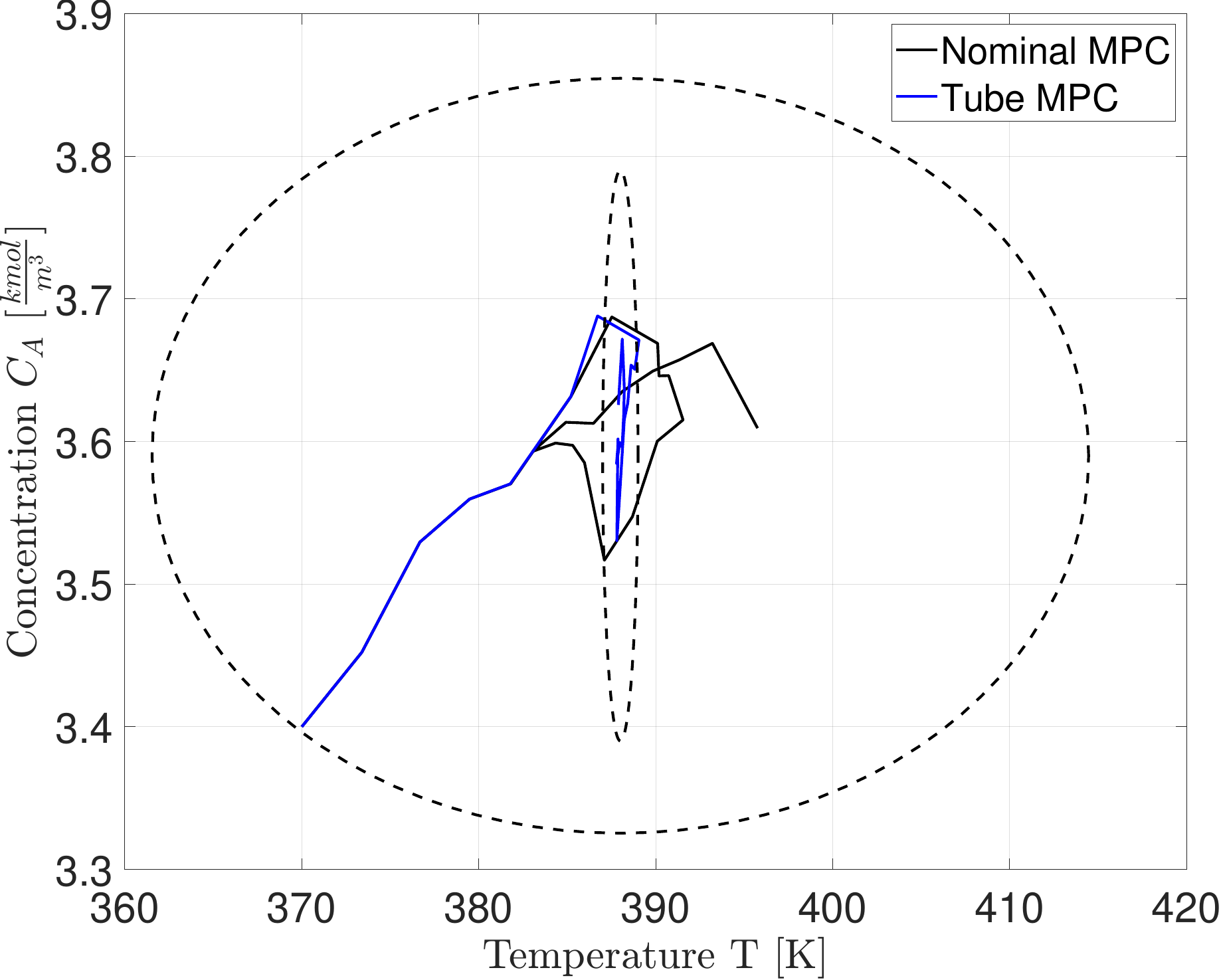}
\caption{Phase portrait of the CSTR simulation. While the nominal MPC approach leaves the target set, the tube MPC manages to stay inside even under disturbances.}
\label{fig:PhasePortraitCSTR}
\end{figure}
\section{Conclusion}
In the presented work we have introduced a novel predictive method for NCS that deals efficiently with communication constraints such as delays and packet losses. The technique relies on few, mild assumptions and is therefore applicable in a general setting. Thanks to the property of prediction consistency, it can be used with any type of provably stabilizing predictive control, linear or nonlinear. \\
Additionally, we have introduced tube MPC to the problem of Networked Predictive Control. We derived a bound on the minimum robust invariant disturbance set. This bound arises due to additional delays and longer error developments compared to the nominal setting due to the communication constraints. However, it does not restrict the method much, as in practice often a maximum positive invariant disturbance set is approximated, which contains the bounded set. \\
Further research may include investigations of adaptive techniques to the changing quality-of-service parameters of networks. 
Other areas of interest surround using online updated versions of tube MPC or learning MPC with the networked approach.

\section*{APPENDIX}

\subsection{Proof of Lemma \ref{thm:Lemma1}}
\begin{proof}
    Prediction consistency is valid when all inputs that are newly applied at the plant were predicted with previously applied inputs \cite{L.Grune.2009}. Therefore, we need to ensure two things:
    \begin{enumerate}
        \item Buffer $B_c$ is consistent with $B_p$ before each state prediction at the remote controller
        \item The local controller at the plant side distinguishes wrongly predicted inputs and discards them
    \end{enumerate} 
    The second condition is ensured through the prediction consistency check (\ref{eq:PredictionConsistencyPlant}) (and subsequent pruning if applicable), which relies on the IDs of the input sequences. Alas, we need to look at the ID assignment at the remote controller to guarantee the first condition. \\
    The proof consists of analyzing the situations of packet losses and maximum delays in the backward (sensor-to-controller) and forward (controller-to-actuator) channels, both separately and together. Additionally, we need to differentiate between nominal and recovery mode.\\
   Delays and dropouts in the backward channel do not cause prediction inconsistencies. As long as a measurement arrives in time, the remote controller uses it to predict the next input series and sends the result to the plant. If a measurement packet $P_p$ is lost, the remote controller does not change behavior, as it will either do nothing when the controller is in nominal mode or it will rely on the previous measurement in recovery mode. On both occasions, the buffer contents of controller $B_c$ and plant $B_p$ stay consistent, and thus, the future state predictions are also consistent.\\
    The critical behavior occurs in the forward channel. Input delays are not problematic as long as the total roundtrip time stays below its upper-bound $\bar{\tau}_{RTT}$. Under these circumstances, the computed inputs arrive before or just at their application time, and the plant holds them in its buffer $B_p$ until they become valid. 
    If a drop occurs in the forward channel, the buffers on the plant and controller side differ from each other. As soon as the next measurement reaches the remote controller, the prediction inconsistency is detected as the test (\ref{eq:PredictionConsistentIDs}) fails, and the recovery mode is activated. Through the pruning strategy (\ref{eq:PruningOnError}) buffer $B_c$ is made consistent with $B_p$ again. By design, all predictions in recovery mode are solely based on $i_{p,last}$ of the last arrived measurement, which is the ID of the last consistently applied input at the plant. This guarantees that the next input trajectory, which arrives at the plant, is consistently predicted. In recovery mode, a new sequence is sent at every time step, regardless of a new measurement. The maximum number of time steps from the point of detection of an error at the plant until a new consistently predicted correction trajectory arrives is
    \begin{align}\label{eq:MaxTimeError}
        M = \bar{n}_l + \bar{\tau}_{RTT}.
    \end{align}
     This safety procedure is executed until a new measurement arrives at the remote controller, which carries the ID of a correction trajectory. Thus, the error correction is acknowledged. As a result, the remote controller needs to correct its buffer $B_c$ through (\ref{eq:PruneBufferOnCorrection}). This ensures, that Buffers $B_c$ and $B_p$ are consistent again. Now the remote controller can resume its nominal operation. 
    Ideally, if no losses occur, the recovery mode is resolved within one cycle of the maximum roundtrip time. In the worst case, however, the system must endure the maximum amount of steps as in (\ref{eq:MaxTimeError}). Additionally, if the system just recovered from a previous error, but the first value in nominal mode is dropped, it takes a full roundtrip time until the new error is detected by the remote controller. Therefore, we need to have at least 
    \begin{align}
    \label{eq:MaximumHorizon}
        N \geq \bar{n}_l+2\bar{\tau}_{RTT}
    \end{align}
    predicted input values to cope with the worst case situation. This poses a lower bound on the prediction horizon for MPC and the proof is concluded.
\end{proof}

\newpage
\addtolength{\textheight}{-1cm}   
\printbibliography

\end{document}